\renewcommand{\L}{\mathcal{L}}
\newtheorem{theorem}{Theorem}[section]
\newtheorem{definition}[theorem]{Definition}
\newtheorem{lemma}[theorem]{Lemma}
\newtheorem{proposition}[theorem]{Proposition}
\newtheorem*{remark}{Remark}
\begin{document}

\title{Renormalization Group Transformations Near the Critical Point: Some Rigorous Results}

\author{Mei Yin}
\address{Department of Mathematics, University of Texas, Austin, TX
78712, USA} \email{myin@math.utexas.edu}

\dedicatory{\rm October 19, 2011}

\begin{abstract}
We consider renormalization group (RG) transformations for
classical Ising-type lattice spin systems in the infinite-volume
limit. Formally, the RG maps a Hamiltonian $H$ into a renormalized
Hamiltonian $H'$:
\begin{equation*}
\exp(-H'(\sigma'))=\sum_{\sigma}T(\sigma,
\sigma')\exp(-H(\sigma)),
\end{equation*}
where $T(\sigma, \sigma')$ denotes a specific RG probability
kernel, $\sum_{\sigma'}T(\sigma, \sigma')=1$, for every
configuration $\sigma$. With the help of the Dobrushin uniqueness
condition and standard results on the polymer expansion, Haller
and Kennedy gave a sufficient condition for the existence of the
renormalized Hamiltonian in a neighborhood of the critical point.
By a more complicated but reasonably straightforward application
of the cluster expansion machinery, the present investigation
shows that their condition would further imply a band structure on
the matrix of partial derivatives of the renormalized interaction
with respect to the original interaction. This in turn gives an upper bound
for the RG linearization.
\end{abstract}

\maketitle

\section{Introduction}
\label{intro} We consider renormalization group (RG)
transformations of finite-range and translation-invariant
Hamiltonians. Among possible RG transformations, there may be some
that are good, in the sense that they have a non-trivial fixed
point with desirable properties. This fixed point would represent a
critical state and is invariant under the RG transformation. It
sits on a critical surface, which consists of all those distinct
Hamiltonians whose critical trajectories under the RG map converge
to it. For a point not on the critical surface but very close to
being critical, the RG map will first drive it towards the fixed
point for a large number of iterations, but eventually will drive
it away. In the critical region, the thermodynamic systems are
characterized by long-range correlations among
microscopic fluctuations of local quantities that persist out to
macroscopic wavelengths. There is the astonishing empirical fact
that certain exponents associated with critical phenomena are
universal, and in particular, they are related by the scaling laws
to eigenvalues of the linearized RG map near the fixed point
\cite{Ma}. We are therefore interested in studying the behavior of
the RG transformation when the Hamiltonian is on or near the critical
surface, and most preferably in a neighborhood of the fixed point.

The relevant mathematical work is extensive, but it is fair to say
that many if not most questions remain unanswered. One could worry
about the various issues raised by van Enter, Fern\'{a}ndez, and
Sokal \cite{vanEnter2}, questioning whether the transformation is
even defined. Fortunately, if one thinks of the expansions
defining the RG transformation as analogous to other expansions in
statistical mechanics, then even if the original system is at the
critical point, under certain conditions, the systems that must be
studied to define the RG map need not be critical. Therefore there
is hope of using convergent expansions to define and analyze the
properties of the RG transformation.

Aizenman \cite{Aizenman} proved and explained a number of basic
features of the critical behavior of Ising models and $\phi^4$
fields in high and low dimensions by a non-perturbative analysis
of the field variables. Rivasseau \cite{Rivasseau} used
perturbative and constructive renormalization to investigate
rigorously the phenomenon of asymptotic freedom. Martinelli and
Olivieri \cite{Mar1, Mar2} investigated the stability and
instability of pathologies of RG transformations under decimation.
Haller and Kennedy \cite{Haller} showed that a single RG
transformation could map an area including a critical point to a
set of well-defined renormalized interactions. (See Fisher \cite{Fisher} for a
brief introduction to the historical developments in RG theory.)

Triggered by the realization that a single-site stochastic RG map,
say Kadanoff transformation with time-dependent parameter $p(t)$,
could be viewed as an infinite-temperature Glauber dynamics,
further inquiries followed. Van Enter, Fern\'{a}ndez, den
Hollander, and Redig \cite{EF} studied the time evolution of a low
non-zero temperature Gibbs state of Ising spins under infinite
temperature Glauber dynamics and showed that the evolved state is
Gibbsian for short time always, but non-Gibbsian for long enough
times. Le Ny and Redig \cite{LR} proved that for a short interval
of time a Gibbs measure with a finite range interaction evolved
under a general local stochastic dynamics would always remain
Gibbsian. Maes and Neto\v{c}n\'{y} \cite{MN} considered classes of
both discrete time and continuous time interacting particle
systems in the weak coupling regime and identified sufficient
conditions for which the time-evolved measure is Gibbsian for all
(even infinite) times. K\"{u}lske and Opoku \cite{KO} extended the
notion of Gibbsianness for mean-field systems to the setup of
continuous local state spaces and generalized previous case
studies made for spins taking finitely many values. (See \cite{EK}
for a review on some recent developments in the study of Gibbs and
non-Gibbs properties of transformed $n$-vector lattice and
mean-field models under various transformations). More references
to rigorous results inspired by, or implementing RG ideas, may be
found in Brydges \cite{Brydges}, Faris \cite{Faris}, Feldman et
al. \cite{Feldman}, and Mitter \cite{Mitter}.

Formally, the RG maps a Hamiltonian $H(\sigma)=-\sum_X
J(X)\sigma_X$ into a renormalized Hamiltonian $H'(\sigma')=-\sum_Y
J'(Y)\sigma'_Y$:
\begin{equation}
\exp(-H'(\sigma'))=\sum_{\sigma}T(\sigma,
\sigma')\exp(-H(\sigma)),
\end{equation}
where $J$ is the original interaction, $J'$ is the renormalized
interaction, and $T(\sigma, \sigma')$ is a probability kernel,
$\sum_{\sigma'}T(\sigma, \sigma')=1$, for every configuration
$\sigma$. Our basic assumption is that the original interaction
$J$ lies in a Banach space $\mathcal{B}$, with norm
\begin{equation}
||J||=\sum_{X \ni \hspace{0.05cm} 0}|J(X)|.
\end{equation}
Since the interaction space $\mathcal{B}$ does not put any
additional restrictions on the interactions other than absolute
summability, it is the largest physically reasonable space of
interactions. However, as argued in \cite{vanEnter2}, it may be
too large for a useful implementation of RG ideas near a critical
point as some rather strange phenomena would occur. To study the
behavior of the RG transformation when the system is at
criticality, extra assumptions are needed. A key tenet of the
renormalization group is that the introduction of the block spins
would shift the location of the critical point, thus making the
critical situation treatable by analytic methods. Haller and
Kennedy \cite{Haller}, using the Dobrushin-Shlosman machinery,
adopted this approach. They defined a probability measure
$\mu_{\sigma', V, \tau}$ which depends on the finite volume $V$,
the boundary condition $\tau$ and the block spin configuration
$\sigma'$ by
\begin{equation}
\mu_{\sigma', V, \tau}(F)=\frac{\sum_{\sigma}F(\sigma)T(\sigma,
\sigma')e^{-H(\sigma)}}{\sum_{\sigma}T(\sigma,
\sigma')e^{-H(\sigma)}},
\end{equation}
where $F(\sigma)$ is a function on the original spin configuration
$\sigma$. With the help of the Dobrushin uniqueness condition and
standard results on the polymer expansion, they gave a condition
which is sufficient to imply that the renormalized Hamiltonian
$H'$ is defined. Roughly speaking, the condition is that the
collection of measures $\mu_{\sigma'}$ is in the high-temperature
phase uniformly in the block spin configuration $\sigma'$.

\textbf{Hypothesis}: There exist constants $c<\infty$ and $m>0$
such that for every finite subset $V$ of the lattice, every two
sites $i, j \in V$, every boundary condition $\tau$, and every
block spin configuration $\sigma'$,
\begin{equation}
|\mu_{\sigma', V, \tau}(\sigma_i \sigma_j)-\mu_{\sigma', V,
\tau}(\sigma_i)\mu_{\sigma', V, \tau}(\sigma_j)| \leq
ce^{-m|i-j|}.
\end{equation}

They verified this condition numerically in two special cases
\cite{Haller}: Decimation with spacing $b = 2$ on the square
lattice for $\beta<1.36\beta_c$, and the Kadanoff transformation
with parameter $p$ on the triangular lattice in a subset of the
$(\beta, p)$ plane that includes values of $\beta$ around
$\beta_c$. Although critical fixed points do not arise after
infinite iterations of the decimation transformation, and the
limiting behavior of the system under the Kadanoff transformation
still seems to be an open question, this problem does not show up
after a finite number of applications of these maps, so we will
not worry too much about it in our present investigation.
Furthermore, since this hypothesis is similar to the
Dobrushin-Shlosman complete analyticity condition, one would
expect that it holds in a more general setting, and provides a
reasonable assumption for inquiry into the behavior of the RG
transformation near the critical surface.

Using cluster expansion techniques as in \cite{Yin2}, we show that
the \textbf{Hypothesis} also guarantees the existence of the partial
derivatives of the RG transformation (Theorem \ref{expar}). By a
more careful analysis, we will then show that the partial
derivative decays sub-exponentially as the distance between the
set in the original lattice and the set in the image lattice gets
large. It follows that the matrix of partial derivatives displays
an approximate band property (Theorem \ref{band}). This in turn
gives an upper bound for the RG linearization (Theorem \ref{lin}).
These results extend my previous work on the behavior of the RG
transformation at infinite temperature \cite{Yin1}, which shows
that the RG spectrum corresponding to decimation and majority rule
is of an unusual kind: dense point spectrum for which the adjoint
operators have no point spectrum at all, but only residual
spectrum. Thus, although the RG transformation exists, its properties appear not at all to be
what one would expect from the physics literature predictions.

For notational convenience, we will denote
$T(\sigma,\sigma')e^{-H(\sigma)}$ by $e^{-H}$ in the following. As
shown in \cite{Haller}, this modified Hamiltonian $H$ is also
finite-range.

\begin{proposition}
\label{expand} For every subset $W$ of the original lattice and
every subset $Z$ of the image lattice, the partial derivative
$\frac{\partial J'(Z)}{\partial J(W)}$ of the RG transformation is
given by the expression
\begin{equation}
\label{part} \frac{\partial J'(Z)}{\partial
J(W)}=\sum_{\sigma'}\sigma'_Z\frac{\sum_{\sigma}\exp(-H)\sigma_W}{\sum_{\sigma}\exp(-H)}.
\end{equation}
\end{proposition}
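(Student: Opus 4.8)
The plan is to read \eqref{part} off of two facts: that the renormalized Hamiltonian is minus the logarithm of the weighted sum appearing on the right-hand side of the RG equation, and that each renormalized coupling $J'(Z)$ can be recovered from $H'$ by inverting the Ising-character expansion. First I would solve the defining relation for $H'$, writing $H'(\sigma')=-\log\sum_{\sigma}\exp(-H)$, which is legitimate since every summand $T(\sigma,\sigma')e^{-H(\sigma)}$ is strictly positive. Working in a fixed finite volume (the infinite-volume convergence being deferred to the existence result in Theorem \ref{expar}), $-H'$ is an ordinary real-valued function of the block-spin configuration $\sigma'$, and therefore admits the unique multilinear expansion $-H'(\sigma')=\sum_Y J'(Y)\sigma'_Y$.

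Next I would extract a single coupling. By the orthogonality of the Ising characters $\sigma'\mapsto\sigma'_Y$, expressed by the relation $\sum_{\sigma'}\sigma'_Y\sigma'_Z\propto\delta_{Y,Z}$, the coefficient $J'(Z)$ is obtained by multiplying $-H'(\sigma')$ by $\sigma'_Z$ and summing over block-spin configurations, normalized exactly as in \eqref{part}. This reduces the proposition to differentiating this inversion formula in $J(W)$.

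For the differentiation, the key observation is that the probability kernel $T$ carries no dependence on the interaction, so the only $J$-dependence of the modified Hamiltonian is through its original part $-\sum_X J(X)\sigma_X$, giving $\partial H/\partial J(W)=-\sigma_W$. Applying the chain rule to the logarithm then yields $\frac{\partial H'(\sigma')}{\partial J(W)}=-\frac{\sum_{\sigma}\exp(-H)\sigma_W}{\sum_{\sigma}\exp(-H)}$, which is precisely the conditional expectation $\mu_{\sigma'}(\sigma_W)$ of the measure defined in the introduction. Substituting this into the inversion formula of the previous step and interchanging the (finite) sum over $\sigma'$ with the $J(W)$-derivative produces \eqref{part}; the two minus signs cancel, leaving the stated combination $\sum_{\sigma'}\sigma'_Z\,\mu_{\sigma'}(\sigma_W)$.

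I expect the only genuine issues to be interchanges of limiting operations. In finite volume every sum is finite and each summand is smooth in $J$, so differentiating under the summation sign and swapping the $\sigma'$-sum with the derivative are elementary and carry no risk. The delicate point is instead the passage to the infinite-volume limit, namely that these finite-volume derivatives converge and that the resulting series defines $J'(Z)$ as a genuinely differentiable function of the original interaction. That, however, is exactly what the cluster-expansion estimates behind Theorem \ref{expar} are designed to control, so for the present proposition I would record the formal identity at finite volume and leave convergence to that theorem.
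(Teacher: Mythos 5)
Your proposal is correct and takes essentially the same approach as the paper: the paper's proof simply records the character-inversion formula $J'(Z)=\sum_{\sigma'}\sigma'_Z\log\left(\sum_{\sigma}T(\sigma,\sigma')e^{\sum_X J(X)\sigma_X}\right)$ and differentiates both sides in $J(W)$, which is precisely your orthogonality-plus-chain-rule argument written more tersely. Your closing remarks on finite volume versus the infinite-volume limit also mirror the paper's own remark following the proposition, which defers that issue to the later cluster-expansion sections.
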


\begin{proof}
The renormalized coupling constants $J'$ are given by
\begin{equation}
\label{JZ} J'(Z)=\sum_{\sigma'}\sigma'_Z
\log\left(\sum_{\sigma}T(\sigma,\sigma')e^{\sum_X
J(X)\sigma_X}\right).
\end{equation}
We take the derivative of both sides of (\ref{JZ}) with respect to
$J(W)$.
\end{proof}

\begin{definition}
For every subset $Z$ of the image lattice, the linearization
$\mathrm{L}(J_c)$ of the RG transformation at a critical point
$J=J_c$ is given by a linear function of the deviation K:
\begin{equation}
\label{linear} \mathrm{L}(J_c)K(Z)=\sum_W \left.\frac{\partial
J'(Z)}{\partial J(W)}\right|_{J=J_c}K(W),
\end{equation}
where $W$ ranges over all finite subsets of the original lattice.
\end{definition}

\begin{remark}
The above calculations are only rigorous for finite lattices, but
may be interpreted in some more sophisticated limiting sense for
infinite lattices, following standard interpretation of
statistical mechanics, as will be shown in later sections.
\end{remark}

Many important physical properties emerge from spectral properties
of the linearization of the RG map. For an interaction very close
to being critical ($J=J_c+K$ with $K$ small), its behavior under
the RG transformation will be governed by the linearization
$\mathrm{L}(J_c)$:
\begin{equation}
\label{last}
(J_c+K)'(Z)=J_c'(Z)+\mathrm{L}(J_c)K(Z)+\text{corrections}.
\end{equation}
The main difficulty in this approximation is that the original
interaction $J$ is not small near the critical point, thus a
direct cluster expansion is not applicable. However, there is a
marvellous estimate on long range energies that will provide us
with the smallness needed. We first review relevant results
\cite{Haller}, where the existence of the renormalized interaction
$J_c'$ was justified.

\section{Review of relevant results}
\label{sec:1} The key idea \cite{Haller} is to divide the original
lattice $\L$ into blocks that are $L$ sites long on each side.
Haller and Kennedy referred to them as $L$-blocks (indexed by
$\bar{\L}$), and chose $L$ large enough so that these $L$-blocks
are commensurate with the blocks in the RG transformation, i.e.,
each RG block is a subset of an $L$-block. A subset $X$ of $\L$
defines a subset $\bar{X}$ of $\bar{\L}$, corresponding to the set
of $L$-blocks that have non-empty intersection with $X$.
Conversely, for each site $y$ in $\bar{\L}$, there is a
corresponding $L$-block $y^o$ that is a subset of $\L$. They
divided these $L$-blocks into $2^d$ types, where $d$ is the number
of dimensions of the lattice system. For illustration purposes,
they restricted their attention to two dimensions, so there would
be $4$ types of $L$-blocks, labelled by $i=1,2,3,4$. Let $\sum_i$
denote the summation over the spins which are in a type-$i$
$L$-block. Then trivially,
\begin{equation}
\sum_{\sigma}\exp(-H)=\sum_4 \sum_3 \sum_2 \sum_1 \exp(-H).
\end{equation}

They started by considering $\sum_1 \exp(-H)$ and defined $F^1$ by
$\exp(-F^1)=\sum_1 \exp(-H)$. The sum $\sum_1$ would factor into a
product over type $1$ blocks of the sum over the spins in that
block, $F^1$ is therefore a function of the spins in blocks of
types $2$, $3$, $4$ and the boundary spins. However, when they
tried to compute $\sum_2 \exp(-F^1)$ in a similar fashion, they
ran into difficulty: $F^1$ can contain terms which involve spins
in more than one type $2$ block, so the sum $\sum_2$ does not
factor into a product of independent sums over the type $2$
blocks. To proceed, they distinguished long-range terms
$F_{\text{LR}}^1=\sum_{B: \text{LR}}F_B^1$ supported on sets of
sites with diameter greater than $L$ that prevent the
factorization from short-range terms $F_{\text{SR}}^1=\sum_{B:
\text{SR}}F_B^1$ that do not. Then $\sum_2\exp(-F_{\text{SR}}^1)$
would factor into a product over the type $2$ blocks, and they
defined $F^2$ by $\exp(-F^2)=\sum_2\exp(-F_{\text{SR}}^1)$.

They continued the above constructions iteratively, always
throwing out the long-range terms that prevent the factorization.
Eventually, after performing all the summations, they obtained
$F^4$. For each allowable long-range $B$ (small enough to fit
inside $L$-blocks with side length not exceeding $3L$, thus
consisting of at most $p=3^d$ $L$-blocks), they defined
\begin{eqnarray}
\label{alpha} K(B)=\exp(-F_B^1-F_B^2-F_B^3)-1.
\end{eqnarray}
They then defined a modified expectation $E$, given by
\begin{eqnarray}
Ef=\exp(F^4)\sum_4 \sum_3 \sum_2 \sum_1
\exp(-H+F_{\text{LR}}^1+F_{\text{LR}}^2+F_{\text{LR}}^3)f.
\end{eqnarray}

\section{Cluster expansion}
\label{sec:2} We introduce some combinatorial concepts. A
hypergraph is a set of sites together with a collection $\Gamma$
of nonempty subsets. Such a nonempty set is referred to as a
hyper-edge or link. Two links are $L$-connected if the $L$-blocks
they occupy are within $aL$-distance apart, where $a$ is a
constant that only depends on the number of dimensions $d$ as
shown in \cite{Haller}. A hypergraph $\Gamma$ is $L$-connected if
the support of $\Gamma$ is nonempty and cannot be partitioned into
nonempty sets with no $L$-connected links. We use $\Gamma_C$ to
indicate $L$-connectivity of the hypergraph $\Gamma_C$, and write
$\Gamma_C^*=\cup \bar{\Gamma}_C$ for the support of
$\bar{\Gamma}_C$ in $\bar{\L}$.

As is usual for expansion methods, we work in a finite volume, but
as explained in \cite{Haller}, all estimates are uniform in the
volume and insensitive to boundary conditions, thus the
infinite-volume limit exists according to standard interpretation
of statistical mechanics. In the following most quantities depend
on the finite volume $V$, the choice of boundary condition $\tau$,
and the block spin configuration $\sigma'$, but this dependence is
made implicit.

Haller and Kennedy \cite{Haller} argued that the denominator of
(\ref{part}) has the following cluster representation:
\begin{equation}
\sum_{\sigma}\exp(-H)=e^{-F^4}\sum_{\Delta}\prod_{N\in \Delta}
w_N,
\end{equation}
where $\Delta$ is a set of subsets $N$'s of $\bar{\L}$ (pairwise
at least $a$-distance apart), and
\begin{equation}
\label{w} w_N=\sum_{\Gamma_C^*=N}E\left(\prod_{B\in
\Gamma_C}K(B)\right).
\end{equation}
They justified this by first noticing that it is possible to bound
(\ref{w}) by
\begin{equation}
\label{v} |w_N|\leq v_N=\sum_{\Gamma_C^*=N}\prod_{B\in
\Gamma_C}||K(B)||_{\infty},
\end{equation}
and then showing that under their \textbf{Hypothesis} there is a
function $\epsilon(L)$ such that
\begin{equation}
||K(B)||_{\infty} \leq \epsilon(L)
\end{equation}
for every allowable $B$, with $\epsilon(L)\rightarrow 0$ as
$L\rightarrow \infty$.

We now examine the effect of multiplying $\sigma_W$ to the above
cluster representation as in the numerator of (\ref{part}). There
will be two kinds of terms. In some of these, none of the
$L$-connected components intersect $W$, so for these terms one
gets a product of $\sigma_W$ with a product of independent
$w_N$'s. For the other terms, one decomposes $\Delta$ into a set
of $L$-connected components that is also $L$-connected to $W$ and
remaining ones that are not. We arrive at the representation
\begin{equation}
\sum_{\sigma}\exp(-H)\sigma_W=e^{-F^4}\sum_{R,
\Delta'}\tilde{w}_R\prod_{N\in\Delta'}w_N,
\end{equation}
where $R=\emptyset$ or $R$ and $\bar{W}$ are within $a$-distance
apart, and $\tilde{w}_R$ is a sum over hypergraphs $\Delta_R$ with
$\cup \Delta_R=R$ such that $W$ and $\Delta_R$ are $L$-connected.
($\Delta_R$ itself may not be $L$-connected.) Therefore
\begin{equation}
\label{bound} \frac{\partial J'(Z)}{\partial
J(W)}=\sum_{\sigma'}\sigma'_Z\frac{\sum_{R,
\Delta'}\tilde{w}_R\prod_{N\in\Delta'}w_N}{\sum_{\Delta}\prod_{N\in
\Delta} w_N}.
\end{equation}
We will justify this formal result in the following section.

\section{Existence of the partial derivatives}
\label{sec:3}
\begin{theorem}[Koteck\'{y}-Preiss]
\label{KP} For subsets $N_i$'s of $\bar{\L}$, define
\begin{eqnarray}
\label{c} c(N_1,N_2)=\left\{\begin{array}{ll}
1 & \mbox{if $N_1$ and $N_2$ are within $a$-distance apart};\\
0 & \mbox{otherwise},\end{array} \right.
\end{eqnarray}
and
\begin{equation}
\label{C} C\left(N_1,...,N_n\right)=\sum_{G_c}\prod_{\{i,j\}\in
G_c}\left(-c(N_i,N_j)\right),
\end{equation}
where $G_c$ is a connected graph with vertex set $\{1,...,n\}$.
Take $M>1$. Suppose that for each site $y$ in $\bar{L}$,
\begin{equation}
\label{conv} \sum_{N'}c(N, N')v_{N'}M^{|N'|}\leq |N|\log(M).
\end{equation}
Then the avoidance probability for every $Y\subset \bar{\L}$ has a
convergent power series expansion,
\begin{equation*}
\left|\sum_{\Delta'}\prod_{N\in\Delta'}w_N/\sum_{\Delta}\prod_{N\in
\Delta} w_N\right|
\end{equation*}
\begin{eqnarray*}
=\left|\exp\left(-\sum_{n=1}^{\infty}\frac{1}{n!}\sum_{N_1,...,N_n}C\left(N_1,...,N_n\right)c(Y,
\cup_1^n N_i)w_{N_1}\cdots w_{N_n}\right)\right|
\end{eqnarray*}
\begin{equation}
\label{avoid} \leq
\exp\left(\sum_{n=1}^{\infty}\frac{1}{n!}\sum_{N_1,...,N_n}|C\left(N_1,...,N_n\right)|c(Y,
\cup_1^n N_i)v_{N_1}\cdots v_{N_n}\right)\leq M^{|Y|},
\end{equation}
where $\Delta'$ is a set of subsets of $\bar{\L}$ (pairwise at
least $a$-distance apart) that is also at least $a$-distance away
from $Y$, and $\Delta$ is a set of subsets of $\bar{\L}$ (pairwise
at least $a$-distance apart).
\end{theorem}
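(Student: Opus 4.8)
The plan is to prove this as a standard Kotecký–Preiss polymer expansion convergence result, where the "polymers" are the $L$-connected components $N$ with activities $w_N$, and the geometric compatibility relation is encoded by $c(N_1,N_2)$. The target is to show that the ratio of partition functions (the avoidance probability) has a convergent cluster expansion whose logarithm is the sum over connected clusters, and that the resulting exponent is bounded by $|Y|\log M$. The convergence criterion is precisely the hypothesis \eqref{conv}, which is the site-form of the standard Kotecký–Preiss condition with weight function $\|N\|\mapsto|N|\log M$.

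**First I would establish the formal identity** between the ratio of partition functions and the exponential of the connected-cluster sum. This is the classical polymer/Mayer expansion: writing $\sum_\Delta \prod_N w_N = \exp\bigl(\sum_{n\ge1}\frac1{n!}\sum_{N_1,\dots,N_n} C(N_1,\dots,N_n)\,w_{N_1}\cdots w_{N_n}\bigr)$, where $C$ is the Ursell-type combinatorial coefficient defined in \eqref{C} as the sum over connected graphs on $\{1,\dots,n\}$ of products of $-c(N_i,N_j)$. Taking the quotient of the $Y$-avoiding partition function by the full one cancels all clusters that do not touch $Y$, leaving exactly the indicator $c(Y,\cup_1^n N_i)$ inside the sum, which gives the stated equality in \eqref{avoid}. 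The absolute-value bound then follows by replacing each $w_{N_i}$ with $v_{N_i}\ge|w_{N_i}|$ from \eqref{v} and each $C$ with $|C|$, noting $|\exp(x)|\le\exp|{\rm Re}\,x|\le\exp(\sum|\cdot|)$.

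**The main work — and the main obstacle — is the second inequality,** namely bounding $\sum_{n\ge1}\frac1{n!}\sum_{N_1,\dots,N_n}|C(N_1,\dots,N_n)|\,c(Y,\cup N_i)\,v_{N_1}\cdots v_{N_n}\le|Y|\log M$. The standard route is the tree-graph inequality, which bounds $|C(N_1,\dots,N_n)|$ by a sum over spanning trees of $\prod_{\{i,j\}\in\text{tree}}c(N_i,N_j)$. One then fixes that at least one $N_i$ is within $a$-distance of $Y$ (forced by the $c(Y,\cup N_i)$ factor), sums over the tree structure by peeling off leaves, and at each summation step over an adjacent polymer $N'$ uses the inductive bound $\sum_{N'}c(N,N')v_{N'}M^{|N'|}\le|N|\log M$ supplied by \eqref{conv}. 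Here the factors $M^{|N_i|}$ are generated to exactly absorb the combinatorial entropy of the tree; the telescoping leaves a factor $M^{|Y|}$ and hence the bound $\log\bigl(M^{|Y|}\bigr)=|Y|\log M$. The delicate point is bookkeeping the $M^{|N|}$ weights so that the condition \eqref{conv} can be applied at every vertex of the tree without double-counting, and verifying that summing over which site of $\bar{\L}$ anchors the cluster to $Y$ contributes the factor $|Y|$.

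Since this is the classical Kotecký–Preiss theorem with the specific compatibility relation $c$ and activities $w_N$ arising from the present setup, I would cite the standard proof and verify only that \eqref{conv} is the correct specialization of the abstract convergence criterion; the inequality $|w_N|\le v_N$ from \eqref{v} then transfers convergence from the majorizing series in $v_N$ to the series in $w_N$, justifying the formal manipulations of the previous section.
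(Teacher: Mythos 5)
Your proposal is correct and takes essentially the same route as the paper: the paper gives no proof of this theorem at all, stating it as the known Koteck\'{y}--Preiss result and deferring to \cite{Kotecky}, which is exactly what your final paragraph proposes (identify \eqref{conv} as the Koteck\'{y}--Preiss condition with weight $|N|\log M$, use $|w_N|\le v_N$ from \eqref{v}, and cite the standard proof). Your sketch of that standard argument --- the cluster-expansion identity for the avoidance ratio, majorization by the $v_N$, and the tree/inductive bound for clusters pinned at $Y$ yielding the exponent $|Y|\log M$ --- is a sound outline of the cited result.
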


\begin{proposition}
\label{M} Take $M>1$. Suppose \textbf{Hypothesis} holds. Suppose
$L$ is sufficiently large so that $\epsilon(L)$ is sufficiently
small,
\begin{equation}
\label{eps} \epsilon(L) \leq \frac{\log(M)(p-1)^p}{rc(Mp)^p
\left(1+(p-1)\log(M)\right)},
\end{equation}
where
\begin{equation}
c=\sum_{m=1}^{p}\sup_{x\in \L}\#\{B: x\in B, |\bar{B}|=m\}<\infty
\end{equation}
due to finite-range and translation-invariant assumptions on the
Hamiltonian, and $r$ is a constant that only depends on the
distance $a$ and the number of dimensions $d$:
\begin{equation}
r=\sup_{y\in \bar{\L}}\#\{z: \text{dist}(y, z)\leq a\}.
\end{equation}
For each site $y$ in $\bar{L}$, let $a_y(N)$ be the collection of
subsets $N\subset \bar{L}$ that satisfy $\text{dist}(y, N)\leq a$.
Then we have
\begin{equation}
\label{2ineq} \sum_{N\in a_y(N)}v_NM^{|N|}\leq \log(M).
\end{equation}
\end{proposition}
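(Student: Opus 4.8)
The plan is to recognize the left-hand side of (\ref{2ineq}) as a weighted sum over $L$-connected clusters anchored at $y$, reduce it to a purely combinatorial series by crude weight bounds, and then match that series against the threshold (\ref{eps}). First I would unfold the definition of $v_N$ via (\ref{v}): since $v_N=\sum_{\Gamma_C^*=N}\prod_{B\in\Gamma_C}\|K(B)\|_{\infty}$, the quantity $\sum_{N\in a_y(N)}v_N M^{|N|}$ is exactly $\sum\prod_{B\in\Gamma_C}\|K(B)\|_{\infty}\,M^{|\Gamma_C^*|}$, where the outer sum runs over all $L$-connected hypergraphs $\Gamma_C$ whose support $\Gamma_C^*$ lies within $a$-distance of $y$. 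Two elementary reductions then strip this to a combinatorial count. The uniform bound $\|K(B)\|_{\infty}\le\epsilon(L)$ replaces each factor by $\epsilon(L)$; and because the support is the union of the blocks occupied by the links, $|\Gamma_C^*|\le\sum_{B\in\Gamma_C}|\bar B|\le p|\Gamma_C|$, so with $M>1$ we get $M^{|\Gamma_C^*|}\le\prod_{B\in\Gamma_C}M^{|\bar B|}\le(M^p)^{|\Gamma_C|}$. Thus every link carries an effective activity of size at most $\epsilon(L)M^{p}$, and only the cluster count remains.

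The heart of the argument is to bound the number of $L$-connected clusters of a prescribed number of links anchored near $y$, weighted by these activities. I would use a rooted spanning-tree estimate: every $L$-connected cluster contains at least one spanning tree in its link-adjacency graph, so summing over rooted spanning trees dominates the cluster sum. The root link must occupy a block within $a$-distance of $y$, a choice controlled by $r=\sup_{y}\#\{z:\mathrm{dist}(y,z)\le a\}$ and by the bound $c$ on the number of allowable links through a prescribed site; each further child link must be $L$-connected to its parent, and since the parent occupies at most $p$ blocks the number of admissible children is again bounded through $r$, $c$, and $p$. The goal of the count is to show that the root contributes total weight at most $\delta:=rc(Mp)^{p}\epsilon(L)/(p-1)^{p}$ and each additional link a branching factor at most $(p-1)\delta$; the definition of $\delta$ deliberately carries a little slack (the factor $p^{p}/(p-1)^{p}$ over the crude $cM^{p}$) so that the root weight and the per-link branching combine into a clean geometric ratio $(p-1)\delta$.

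Granting this, the whole sum is dominated by the geometric series $\sum_{n\ge1}\delta\,[(p-1)\delta]^{n-1}=\delta/(1-(p-1)\delta)$, which converges because (\ref{eps}) forces $(p-1)\delta<1$. The proof then closes with a one-line algebraic check: (\ref{eps}) is exactly equivalent to $\delta/(1-(p-1)\delta)\le\log(M)$. Indeed, clearing the denominator, the latter reads $\delta\bigl(1+(p-1)\log M\bigr)\le\log M$, i.e. $rc(Mp)^{p}\epsilon(L)\bigl(1+(p-1)\log M\bigr)\le(p-1)^{p}\log M$, which is precisely (\ref{eps}). This gives $\sum_{N\in a_y(N)}v_N M^{|N|}\le\log(M)$ as claimed; summing this per-site bound over the at most $|N|$ sites of any $N$ then recovers the Koteck\'{y}--Preiss hypothesis (\ref{conv}).

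The main obstacle is the cluster count in the second step. The delicacy is that $\Gamma_C$ is a hypergraph rather than an ordinary graph: a single link occupies up to $p$ blocks, so $L$-connectedness is a relation among sets of blocks, and one must be careful both in setting up the spanning-tree decomposition and in bounding the number of children so as not to spoil the constants $r$, $c$, and the $(p-1)$-branching on which the geometric series hinges. Controlling the overcounting incurred in passing from clusters to spanning trees, while keeping the branching bound sharp enough that the series closes exactly at the threshold (\ref{eps}), is where the real work lies; the weight estimates and the final algebraic comparison are routine.
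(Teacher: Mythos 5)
Your outer reductions coincide with the paper's proof: unfolding $v_N$ via (\ref{v}), replacing each $||K(B)||_{\infty}$ by $\epsilon(L)$ and $M^{|N|}$ by $M^{p|\Gamma_C|}$, and, at the other end, the algebraic check that (\ref{eps}) is precisely $\delta/(1-(p-1)\delta)\le\log(M)$ with $\delta=rc(Mp)^p\epsilon(L)/(p-1)^p$. Your target bound $\delta\bigl((p-1)\delta\bigr)^{n-1}$ on the total weight of $n$-link rooted clusters is also exactly the paper's coefficient estimate (\ref{imply}) combined with the activities. The genuine gap is that this target is asserted, not proved: you call it ``the goal of the count'' and concede that controlling the overcounting ``is where the real work lies.'' That deferred work is the entire mathematical content of the proposition; in the paper it occupies three lemmas --- the recursive bound (\ref{a}), obtained by removing the least root link $B_1$ and using that the remaining components attach \emph{injectively} to the at most $p$ $L$-blocks of $B_1$ (whence the factor $\tbinom{p}{k}$, not a free choice among all links adjacent to the parent); the generating-function identity (\ref{id}), $w=rcz(1+w)^p$; and the radius-of-convergence computation showing $w\le 1/(p-1)$ for $rcz\le (p-1)^{p-1}/p^p$, which yields (\ref{imply}).

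Moreover, the mechanism you sketch cannot be completed as stated. A sum over rooted trees does not factor into a product of sequential per-link factors: the number of tree shapes with $n$ nodes grows exponentially, and the sharp per-link constant in (\ref{imply}), namely $rcp^p/(p-1)^{p-1}=prc\,\bigl(p/(p-1)\bigr)^{p-1}$, is exactly the naive per-child count $prc$ times the shape entropy $\bigl(p/(p-1)\bigr)^{p-1}$ of trees whose children are indexed injectively by the $p$ blocks of the parent. The slack you build into $(p-1)\delta$ over the crude per-child weight $prcM^p\epsilon(L)$ is precisely this factor and nothing more, so it is entirely consumed by the shape entropy of the \emph{injective} attachment. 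If instead each parent may acquire children freely from all $\le prc$ links $L$-connected to it, as your sketch suggests, the relevant generating function becomes $w=z(1+w)^{prc}$, whose growth constant $(prc)\bigl(1+1/(prc-1)\bigr)^{prc-1}$ strictly exceeds $prc\,\bigl(p/(p-1)\bigr)^{p-1}$ whenever $rc>1$; the geometric series you would actually obtain then has ratio larger than $(p-1)\delta$ and closes only below a strictly smaller threshold than (\ref{eps}). In short, to land exactly on the stated threshold you must reproduce the injective block-attachment recursion and then solve it --- i.e., carry out the paper's generating-function analysis --- and your proposal stops just short of that step.
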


\begin{remark}
\textnormal{The inequality (\ref{2ineq}) is a standard sufficient
condition for (\ref{conv}). It will be applied in the following
Theorem \ref{expar}.}
\end{remark}

\begin{proof}
For a fixed but arbitrarily chosen $y$ in $\bar{L}$, we estimate
(\ref{2ineq}).
\begin{eqnarray}
\sum_{N\in a_y(N)}v_N M^{|N|}&=&\sum_{N\in
a_y(N)}\sum_{\Gamma_C^*=N}M^{|N|}\prod_{B\in
\Gamma_C}||K(B)||_{\infty}\\&\leq&\sum_{N\in
a_y(N)}\sum_{\Gamma_C^*=N}M^{p|\Gamma_C|}\left(\epsilon(L)\right)^{|\Gamma_C|}\\&=&\sum_{\Gamma_C:
\text{dist}(y, \Gamma_C^*) \leq
a}\left(M^p\epsilon(L)\right)^{|\Gamma_C|}.
\end{eqnarray}
We say that a hypergraph $\Gamma_C$ is $L$-rooted at $y$ if
$\Gamma_C^*$ and $y$ are within $a$-distance apart. Let $a_n(y)$
be the number of all $L$-connected hypergraphs with $n$ links that
are $L$-rooted at $y$,
\begin{equation}
\label{an} a_n(y)=\#\{\Gamma_C: |\Gamma_C|=n \text{ and dist}(y,
\Gamma_C^*)\leq a\}.
\end{equation}
Let $a_n$ be the supremum over $y$ of the number of $L$-connected
hypergraphs with $n$ links that are $L$-rooted at $y$, i.e.,
$a_n=\sup_{y\in \bar{\L}}a_n(y)$. Then
\begin{equation}
\sum_{N\in a_y(N)}v_N M^{|N|}\leq
\sum_{n=1}^{\infty}a_n\left(M^p\epsilon(L)\right)^n.
\end{equation}

It seems that once we show that $a_n$ grows at most exponentially
with $n$, the geometric series above will converge for small
enough $\epsilon(L)$, and our claim might follow. To estimate
$a_n$, we relate to some standard combinatorial facts
\cite{Minlos}. The rest of the proof follows from a series of
lemmas.
\end{proof}

\begin{lemma}
Let $a_n$ be the supremum over $y$ of the number of $L$-connected
hypergraphs with $n$ links that are $L$-rooted at $y$. Then $a_n$
satisfies the recursive bound
\begin{equation}
a_n\leq rc\sum_{k=0}^{p} \tbinom{p}{k}\sum_{a_{n_1},...,a_{n_k}:
n_1+\cdots+n_k+1=n}a_{n_1}\cdots a_{n_k}
\end{equation}
for $n\geq 1$, where $\tbinom pk$ is the binomial coefficient.
\end{lemma}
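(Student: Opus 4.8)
The plan is to prove the recursion by selecting a canonical root link in the hypergraph and peeling it off, so that the remaining links organize into smaller $L$-connected sub-hypergraphs rooted at the individual blocks of the chosen link. First I would fix a site $y\in\bar{\L}$ (the one at or approaching the supremum defining $a_n$) and consider an arbitrary $L$-connected hypergraph $\Gamma_C$ with $|\Gamma_C|=n$ that is $L$-rooted at $y$. Since $\Gamma_C$ is $L$-rooted at $y$, at least one of its links occupies a block within $a$-distance of $y$; among all such links I would single out one, call it $B_0$, using a fixed total order on links so that the selection is canonical. The number of admissible $B_0$ is at most $rc$: the occupied block near $y$ ranges over at most $r=\sup_{y}\#\{z:\text{dist}(y,z)\leq a\}$ sites, and for each such block the number of allowable links through it is at most $c=\sum_{m=1}^{p}\sup_{x}\#\{B:x\in B,\ |\bar{B}|=m\}$, by the finite-range and translation-invariance assumptions. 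This accounts for the prefactor $rc$.

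Next I would remove $B_0$ and decompose the remaining $n-1$ links into their maximal $L$-connected components $\Gamma_1,\dots,\Gamma_k$. Because $\Gamma_C$ is itself $L$-connected, each component must be $L$-connected to $B_0$, and hence $L$-rooted at one of the at most $p=3^d$ blocks comprising $\overline{B_0}$ (recall that an allowable link occupies at most $p$ blocks). Assigning to each component a block of $B_0$ at which it is rooted and writing $n_j=|\Gamma_j|$, I obtain $n_1+\cdots+n_k+1=n$, with each $\Gamma_j$ counted by $a_{n_j}(z_j)\leq a_{n_j}$, where $z_j$ is its root block. Summing over the choice of which $k$ of the $p$ blocks of $B_0$ carry a component (the factor $\tbinom{p}{k}$), over the compositions $n_1+\cdots+n_k=n-1$, over $k$ from $0$ to $p$, and multiplying by the $rc$ choices for $B_0$, then yields the asserted bound.

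The main obstacle is the bookkeeping in the second step. A priori, two distinct components can be $L$-rooted at the same block of $B_0$, since two links each within $a$-distance of a common block need not be within $a$-distance of one another; thus the clean one-component-per-block structure underlying the factor $\tbinom{p}{k}$ is not automatic. I would resolve this by a canonical assignment rule—for instance, assigning each component to the least-indexed block of $\overline{B_0}$ to which it is $L$-connected—together with the standard combinatorial counting of \cite{Minlos}, which dominates the number of such rooted $L$-connected hypergraphs by the tree-type expression on the right-hand side. Verifying that this one-block-per-branch organization really is an upper bound, i.e.\ that every admissible $\Gamma_C$ is accounted for at least once once the multiplicities at shared root blocks are absorbed, is the step I expect to require the most care; by contrast, once the injection into the tree-type objects is in place, the remaining summation over $k$ and the compositions is routine.
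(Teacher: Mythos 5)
Your strategy coincides with the paper's own proof: the paper also peels off the least link $B_1$ of $\Gamma_C$ whose block set is within $a$-distance of the root $y$ (obtaining the prefactor $rc$ exactly as you do, from the $r$ choices of a nearby site $z$ and the $c$ choices of an allowable link meeting $z^o$), decomposes the remaining $n-1$ links into maximal $L$-connected components, and roots each component at a block associated with the removed link. The step you single out as the main obstacle is precisely the step the paper disposes of by bare assertion: it declares that ``the map from the components $\Gamma_i$ to the $L$-sites $\{y_i\}$ is injective,'' where $y_i^o$ is the least $L$-block through which $\Gamma_i$ is $L$-connected to $B_1$; this one-component-per-block claim is exactly what produces $\tbinom{p}{k}$. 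So your diagnosis of where the difficulty sits is correct, but neither your sketch nor the paper's one-line assertion actually proves it, and your proposal therefore has a genuine gap at its crux.

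Moreover, the repair you suggest does not close it: assigning each component to the least-indexed block of $\overline{B_0}$ to which it is $L$-connected is still not injective. Two components can each contain a link within $a$-distance of one and the same block of $\overline{B_0}$ while their own blocks are of order $2a$ apart, hence not $L$-connected to each other; for instance in $d=1$, let $B_0$ occupy the single block at position $0$ and let two single-link components occupy the blocks at $\pm a$ (rooting $\Gamma_C$ at the block at $+a$, with links ordered by position, makes the middle link the canonical root). An assignment that \emph{is} injective goes the other way around: map each component $\Gamma_i$ to the least block occupied by $\Gamma_i$ itself that lies within $a$-distance of $\overline{B_0}$. If two components shared this root block, each would contain a link occupying that block, and two links occupying a common block are $L$-connected (distance $0\leq a$), contradicting maximality of the components. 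The cost is that these roots range over the at most $rp$ blocks within $a$-distance of $\overline{B_0}$, not over the at most $p$ blocks of $\overline{B_0}$ itself, so the recursion one honestly obtains is
\begin{equation*}
a_n\leq rc\sum_{k=0}^{rp} \tbinom{rp}{k}\sum_{n_1+\cdots+n_k+1=n}a_{n_1}\cdots a_{n_k},
\end{equation*}
or, if one instead allows repeated roots among the blocks of $\overline{B_0}$, a factor $p^k$ in place of $\tbinom{p}{k}$. Either variant still yields at most exponential growth of $a_n$, and the generating-function argument of Proposition \ref{M} goes through with $p$ replaced by $rp$ in (\ref{id}) and (\ref{eps}), so nothing downstream is endangered; but the binomial coefficient $\tbinom{p}{k}$ as stated is justified neither by your argument nor by the paper's.
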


\begin{proof}
We first linearly order the points $x$ in $\L$ and also linearly
order the allowable $L$-blocks $B$ of $\L$. This naturally induces
a linear ordering of the points $y$ in $\bar{\L}$. For a fixed but
arbitrarily chosen $y$ in $\bar{\L}$, we examine (\ref{an}). Write
$\Gamma_C=\{B_1\}\cup \Gamma^1_C$, where $B_1$ is the least $B$ in
$\Gamma_C$ with $\text{dist}(y, \bar{B_1})\leq a$. There must be
such an allowable $B_1$, since $\text{dist}(y, \Gamma_C^*)\leq a$.
Moreover, there must be some $z\in \bar{B_1}$ such that
$\text{dist}(y, z)\leq a$, of which there are $r$ possibilities.
Also notice that every $x \in z^o$ will satisfy $x \in B_1$. Thus
\begin{equation}
a_n(y)\leq r\sum_{m=1}^{p}\sup_{x\in \L}\sum_{B_1: x\in B_1,
|\bar{B}_1|=m}\#\{\Gamma^1_C\}.
\end{equation}
As a consequence,
\begin{equation}
a_n(y)\leq rc\#\{\Gamma^1_C\}.
\end{equation}
The remaining hypergraph $\Gamma^1_C$ has $n-1$ subsets and breaks
into $k: k\leq p$ $L$-connected components $\Gamma_1,...,\Gamma_k$
of sizes $n_1,...,n_k$, with $n_1+\cdots+n_k=n-1$. For each
component $\Gamma_i$, there is a least $L$-block $y_i^o$ through
which it is $L$-connected to $B_1$, and the map from the
components $\Gamma_i$ to the $L$-sites $\{y_i\}$ is injective. We
have
\begin{equation}
a_n(y)\leq rc\sum_{k=0}^{p}
\tbinom{p}{k}\sum_{a_{n_1},...,a_{n_k}:
n_1+\cdots+n_k+1=n}a_{n_1}\cdots a_{n_k}.
\end{equation}
Our inductive claim follows by taking the supremum over all $y$ in
$\bar{\L}$. Finally, we look at the base step: $n=1$. In this
simple case, as reasoned above, we have
\begin{eqnarray}
a_1&=&\sup_{y\in \bar{\L}}\#\{\Gamma_C: |\Gamma_C|=1 \text{ and
dist}(y, \Gamma_C^*)\leq
a\}\notag\\&\leq&r\sum_{m=1}^{p}\sup_{x\in \L}\#\{B: x\in B,
|\bar{B}|=m\}\notag\\&=& rc,
\end{eqnarray}
and this verifies our claim.
\end{proof}

Clearly, $\sum_{N\in a_y(N)}v_N M^{|N|}$ will be bounded above by
$\sum_{n=1}^{\infty}\bar{a}_n\left(M^p\epsilon(L)\right)^n$, if
\begin{equation}
\label{a} \bar{a}_n= rc\sum_{k=0}^p\tbinom
pk\sum_{\bar{a}_{n_1},...,\bar{a}_{n_k}:
n_1+\cdots+n_k+1=n}\bar{a}_{n_1}\cdots \bar{a}_{n_k}
\end{equation}
for $n\geq 1$, i.e., equality is obtained in the above lemma.

\begin{lemma}
Consider the coefficients $\bar{a}_n$ that bound the number of
$L$-connected and $L$-rooted hypergraphs with $n$ links. Let
$w=\sum_{n=1}^{\infty}\bar{a}_n z^n$ be the generating function of
these coefficients. Then the recursion relation (\ref{a}) for the
coefficients is equivalent to the formal power series generating
function identity
\begin{equation}
\label{id} w=rcz(1+w)^p.
\end{equation}
\end{lemma}

\begin{proof}
Notice that $(1+w)^p=\sum_{k=0}^p \tbinom pk w^k$, thus
\begin{eqnarray}
w=rcz\sum_{k=0}^p \tbinom pk w^k.
\end{eqnarray}
Writing completely in terms of $z$, we have
\begin{equation}
\sum_{n=1}^{\infty}\bar{a}_n z^n=rc\sum_{k=0}^p \tbinom pk
\sum_{\bar{a}_{n_1},...,\bar{a}_{n_k}:
n_1+\cdots+n_k+1=n}\bar{a}_{n_1}\cdots \bar{a}_{n_k}z^n.
\end{equation}
Our claim follows from term-by-term comparison.
\end{proof}

\begin{lemma}
If $w$ is given as a function of $z$ as a formal power series by
the generating function identity (\ref{id}), then this power
series has a nonzero radius of convergence $|z|\leq
\frac{(p-1)^{p-1}}{rcp^p}$.
\end{lemma}

\begin{proof}
Without loss of generality, assume $z\geq 0$. Set $z_1=rcz$.
Solving (\ref{id}) for $z_1$ gives
\begin{equation}
z_1=\frac{w}{(1+w)^p}.
\end{equation}
By elementary calculus, this increases as $w$ goes from $0$ to
$1/(p-1)$ to have values $z_1$ from $0$ to $(p-1)^{p-1}/{p^p}$. It
follows that as $z_1$ goes from $0$ to $(p-1)^{p-1}/{p^p}$, the
$w$ values range from $0$ to $1/(p-1)$.
\end{proof}

\noindent \textit{Proof of Proposition \ref{M} continued.} We
notice that in the above lemma,
$w=\sum_{n=1}^{\infty}\bar{a}_nz^n=1/(p-1)$ corresponds to
$z_1=rcz=(p-1)^{p-1}/{p^p}$, which implies that for each $n$,
\begin{equation}
\label{imply} \bar{a}_n \leq
\left(rcp^p\right)^n\left(p-1\right)^{-\left(1+(p-1)n\right)}.
\end{equation}
Gathering all the information we have obtained so far,
\begin{eqnarray}
\sum_{N\in a_y(N)}v_N M^{|N|}&\leq&
\sum_{n=1}^{\infty}\left(rc(Mp)^p
\epsilon(L)\right)^n\left(p-1\right)^{-\left(1+(p-1)n\right)}\\&=&\frac{\frac{rc(Mp)^p\epsilon(L)}{(p-1)^p}}{1-\frac{rc(Mp)^p\epsilon(L)}{(p-1)^{p-1}}}\leq
\log(M)
\end{eqnarray}
by (\ref{eps}). \qed

\begin{theorem}
\label{expar} Suppose \textbf{Hypothesis} holds. Then for every
subset $W$ of the original lattice and every subset $Z$ of the
image lattice, the power series expansion of the partial
derivative $\frac{\partial J'(Z)}{\partial J(W)}$ of the RG
transformation (\ref{bound}) converges absolutely.
\end{theorem}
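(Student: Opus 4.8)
The plan is to bound the right-hand side of (\ref{bound}) directly, using the Koteck\'{y}--Preiss estimate of Theorem \ref{KP} together with the convergence condition established in Proposition \ref{M}. All the cluster bounds (\ref{v}) and the smallness $||K(B)||_{\infty}\le\epsilon(L)$ are uniform in the block spin configuration $\sigma'$, the finite volume $V$, and the boundary condition $\tau$; moreover the outer operation $\sum_{\sigma'}\sigma'_Z$ is a normalized average with $|\sigma'_Z|=1$, so it is dominated by the supremum over $\sigma'$ of the inner ratio. It therefore suffices to bound
\[
\left|\frac{\sum_{R,\Delta'}\tilde{w}_R\prod_{N\in\Delta'}w_N}{\sum_{\Delta}\prod_{N\in\Delta}w_N}\right|
\]
uniformly in $\sigma'$, $V$, and $\tau$.

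First I would fix the rooted support $R$ and peel off the inner sum. For each admissible $R$, the family $\Delta'$ ranges over subsets that are at least $a$-distance from $Y:=R\cup\bar{W}$, so the quotient $\sum_{\Delta'}\prod_{N\in\Delta'}w_N/\sum_{\Delta}\prod_{N\in\Delta}w_N$ is precisely the avoidance probability of $Y$. Proposition \ref{M} gives (\ref{2ineq}), which is the sufficient condition (\ref{conv}), so Theorem \ref{KP} applies and bounds this quotient by $M^{|Y|}\le M^{|\bar{W}|}M^{|R|}$. Consequently the displayed ratio is at most $M^{|\bar{W}|}\sum_{R}|\tilde{w}_R|M^{|R|}$.

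It remains to control $\sum_{R}|\tilde{w}_R|M^{|R|}$. Exactly as (\ref{v}) bounds $w_N$ by $v_N$, I would bound $|\tilde{w}_R|\le\tilde{v}_R:=\sum_{\Delta_R}\prod_{B}||K(B)||_{\infty}$, the sum being over hypergraphs $\Delta_R$ with $\cup\Delta_R=R$ for which $W\cup\Delta_R$ is $L$-connected. Since $\Delta_R$ need not be $L$-connected on its own, it splits into $L$-connected components, each $L$-connected to some site of $\bar{W}$; summing $\tilde{v}_R M^{|R|}$ over all $R$ thus factorizes over these components by a standard Mayer-type estimate and is bounded by $\exp(S)$, where
\[
S=\sum_{y\in\bar{W}}\ \sum_{\Gamma_C:\,\text{dist}(y,\Gamma_C^*)\le a}M^{|\Gamma_C^*|}\prod_{B\in\Gamma_C}||K(B)||_{\infty}=\sum_{y\in\bar{W}}\ \sum_{N\in a_y(N)}v_N M^{|N|}.
\]
By (\ref{2ineq}) each inner sum is at most $\log(M)$, so $S\le|\bar{W}|\log(M)$ and $\exp(S)\le M^{|\bar{W}|}$. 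Combining the two steps yields the uniform bound $|\partial J'(Z)/\partial J(W)|\le M^{2|\bar{W}|}<\infty$, and the Haller--Kennedy uniformity then gives the infinite-volume limit.

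The step I expect to be the main obstacle is the bookkeeping for $\tilde{w}_R$. In contrast to the clean single-site rooting of Proposition \ref{M} and its lemmas, the rooted object is now attached to the whole finite set $W$, and $\Delta_R$ is only required to be $L$-connected to $W$ rather than $L$-connected by itself. Organizing the sum over these forests-rooted-at-$W$ so that it genuinely factorizes over components, and checking that the single weight $M^{|N|}$ is heavy enough to absorb simultaneously the avoidance factor $M^{|Y|}$ from Theorem \ref{KP} and the per-component cluster weight while respecting the smallness budget (\ref{eps}) on $\epsilon(L)$, is the delicate part; the remaining estimates reduce to the geometric and exponential series already controlled above.
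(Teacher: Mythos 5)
Your proposal is correct, and its backbone is the same as the paper's: apply Theorem \ref{KP}, with hypothesis (\ref{conv}) supplied by Proposition \ref{M} through (\ref{2ineq}), to bound the avoidance ratio by $M^{|R\cup\bar{W}|}$, and then control the rooted sum $\sum_R|\tilde{w}_R|$ by replacing $\tilde{w}_R$ with cluster weights built from $\|K(B)\|_\infty$, exactly as (\ref{v}) does for $w_N$. The two arguments part ways only in the final counting. The paper, in (\ref{estimate})--(\ref{suc}), removes $W$, observes that the remaining hypergraph breaks into $k\le p$ $L$-connected components each rooted at a site of $\bar{W}$, and sums $\tbinom{p}{k}(\log M)^k$ to get the $W$-uniform bound $M^p(1+\log M)^p$; this tacitly uses $|\bar{W}|\le p$, i.e.\ that $W$ is an allowable set of a finite-range interaction, and it leans on a rooting/injectivity count of the same kind as in the recursive lemma for $a_n$. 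You instead discard the compatibility constraints among the components and invoke the standard Mayer-type overcounting $\sum_{\{\Gamma_1,\dots,\Gamma_k\}}\prod_i(\cdot)\le\sum_k\frac{1}{k!}\bigl(\sum_\Gamma(\cdot)\bigr)^k=\exp\bigl(\sum_\Gamma(\cdot)\bigr)\le e^S\le M^{|\bar{W}|}$, arriving at $M^{2|\bar{W}|}$. The step you flag as the likely obstacle is in fact already resolved by this $1/k!$ trick: no bound on the number of components and no injectivity of the rooting map is needed, since components rooted at the same site of $\bar{W}$ are simply overcounted. What each approach buys: the paper's count gives a constant independent of $W$ (within the allowable class), which is what later feeds into Theorem \ref{band}; yours holds for an arbitrary finite $W$ and avoids the more delicate component-counting claim, at the price of a bound that grows with $|\bar{W}|$. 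Either way the estimate is uniform in $\sigma'$, $V$, and $\tau$, so absolute convergence and the infinite-volume interpretation follow as in the paper.
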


\begin{proof}
The proof of this theorem is an application of the
Koteck\'{y}-Preiss result \cite{Kotecky}. Recall that $N\in
\Delta'$ implies $N$ and $R\cup \bar{W}$ are at least $a$-distance
apart. By the Koteck\'{y}-Preiss theorem (Theorem \ref{KP}),
(\ref{2ineq}) implies
\begin{equation}
\left|\sum_{\Delta'}\prod_{N\in\Delta'}w_N/\sum_{\Delta}\prod_{N\in
\Delta} w_N\right|\leq M^{|R\cup \bar{W}|}.
\end{equation}
To verify our claim, we need to estimate
\begin{equation}
\label{estimate} \sum_{R}|\tilde{w}_R|M^{|R\cup \bar{W}|} \leq
\sum_{\Delta_R}M^{p}\prod_{Y\in \Delta_R}v_Y M^{|Y|}.
\end{equation}
But this is easy, remove $W$, the remaining hypergraph breaks up
into $k: 0\leq k\leq p$ $L$-connected components. So this last
quantity is bounded by
\begin{equation}
\label{suc} M^{p}\sum_{k=0}^{p}\tbinom {p}{k} \left(\log
(M)\right)^k=M^{p} \left(1+\log(M)\right)^{p}.
\end{equation}
\end{proof}

\section{Band structure}
\label{sec:4} By a more complicated application of the cluster
expansion machinery, we show that when \textbf{Hypothesis} holds,
the matrix of partial derivatives displays an approximate band
property.

\begin{proposition}
\label{pin} Suppose \textbf{Hypothesis} holds. Suppose $L$ is
sufficiently large (cf. (\ref{eps})). Then for every site $y$ in
$\bar{L}$, we have
\begin{equation}
\sum_{N\in a_y(N) \text{ and } |N|>P}v_N M^{|N|}\leq \delta(P),
\end{equation}
where \begin{equation} \label{epsilon}
\delta(P)=\frac{\left(\frac{rc(Mp)^p\epsilon(L)}{(p-1)^{p-1}}\right)^{\frac{P}{p}}}{(p-1)\left(1-\frac{rc(Mp)^p\epsilon(L)}{(p-1)^{p-1}}\right)}.
\end{equation}
It is clear that $\delta(P) \rightarrow 0$ as $P \rightarrow
\infty$.
\end{proposition}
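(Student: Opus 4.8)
The plan is to repeat the geometric-series estimate carried out in the proof of Proposition \ref{M}, inserting the extra constraint $|N|>P$ and tracking how it propagates to the number of links in the underlying hypergraph. First I would expand, exactly as before,
\begin{equation*}
\sum_{N\in a_y(N)\text{ and }|N|>P}v_N M^{|N|}=\sum_{N\in a_y(N)\text{ and }|N|>P}\sum_{\Gamma_C^*=N}M^{|N|}\prod_{B\in\Gamma_C}||K(B)||_{\infty},
\end{equation*}
and apply the two bounds already used there: $\prod_{B\in\Gamma_C}||K(B)||_{\infty}\leq(\epsilon(L))^{|\Gamma_C|}$ from the \textbf{Hypothesis}, and $M^{|N|}\leq M^{p|\Gamma_C|}$. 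The crucial observation is that this second bound comes from $|N|=|\Gamma_C^*|\leq p|\Gamma_C|$, since each link occupies at most $p$ $L$-blocks; consequently the set-size cutoff $|N|>P$ forces the link-number cutoff $|\Gamma_C|>P/p$.

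With this cutoff in place, the restricted sum is bounded by a truncated version of the series in Proposition \ref{M},
\begin{equation*}
\sum_{N\in a_y(N)\text{ and }|N|>P}v_N M^{|N|}\leq\sum_{n>P/p}a_n\left(M^p\epsilon(L)\right)^n\leq\sum_{n>P/p}\bar{a}_n\left(M^p\epsilon(L)\right)^n,
\end{equation*}
and substituting the coefficient bound (\ref{imply}), $\bar{a}_n\leq(rcp^p)^n(p-1)^{-(1+(p-1)n)}$, gives
\begin{equation*}
\sum_{N\in a_y(N)\text{ and }|N|>P}v_N M^{|N|}\leq\frac{1}{p-1}\sum_{n>P/p}\left(\frac{rc(Mp)^p\epsilon(L)}{(p-1)^{p-1}}\right)^n.
\end{equation*}

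Writing $x=\frac{rc(Mp)^p\epsilon(L)}{(p-1)^{p-1}}$, the condition (\ref{eps}) yields $x\leq\frac{(p-1)\log(M)}{1+(p-1)\log(M)}<1$, which is precisely what makes the full series of Proposition \ref{M} converge; the tail over $n>P/p$ is therefore a convergent geometric series equal to $x^{n_0}/(1-x)$, where $n_0$ is the least integer exceeding $P/p$. Since $0\leq x<1$ and $n_0\geq P/p$, one has $x^{n_0}\leq x^{P/p}$, and collecting the prefactor $1/(p-1)$ reproduces exactly $\delta(P)$; because $x<1$, this forces $\delta(P)\to0$ as $P\to\infty$. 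I expect no serious obstacle here, as the estimate is merely a quantitative sharpening of Proposition \ref{M} that reuses the same combinatorial machinery; the only points needing care are the geometric relation $|N|\leq p|\Gamma_C|$ converting the set cutoff into the link cutoff, and the elementary bound $x^{n_0}\leq x^{P/p}$ that handles the possibly non-integer truncation index.
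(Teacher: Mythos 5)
Your proof is correct and takes essentially the same route as the paper's own (much terser) argument: both convert the set-size cutoff $|N|>P$ into the link-number cutoff $|\Gamma_C|>P/p$ via $|N|\leq p|\Gamma_C|$, and then bound the resulting tail by the geometric series with ratio $x=\frac{rc(Mp)^p\epsilon(L)}{(p-1)^{p-1}}<1$ using the coefficient bound (\ref{imply}). Your extra care with the non-integer truncation index (replacing $x^{n_0}$ by $x^{P/p}$) is a detail the paper glosses over by writing the sum as starting at $n=P/p$, but it is the same estimate.
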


\begin{proof}
An $L$-connected hypergraph that is $L$-rooted at $y$ and with
cardinality greater than $P$ will have at least $P/p$ links. This
implies
\begin{equation}
\sum_{N\in a_y(N) \text{ and } |N|>P}v_N M^{|N|}\\
\leq \sum_{n=P/p}^{\infty}\left(rc(Mp)^p
\epsilon(L)\right)^n\left(p-1\right)^{-\left(1+(p-1)n\right)}=\delta(P).
\end{equation}
\end{proof}

\begin{proposition}
\label{pindown} Suppose \textbf{Hypothesis} holds. Suppose $L$ is
sufficiently large (cf. (\ref{eps})). Then for every subset $Y$ of
$\bar{L}$, we have
\begin{equation}
\label{nfinal} \sum_{n=1}^{\infty}\frac{1}{n!}\sum_{N_1,...,N_n:
|\cup_1^n N_i|>P}|C\left(N_1,...,N_n\right)|c(Y, \cup_1^n
N_i)v_{N_1}\cdots v_{N_n} \leq |Y|\delta(P).
\end{equation}
\end{proposition}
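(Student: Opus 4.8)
The plan is to re-run the Kotecký--Preiss estimate behind (\ref{avoid}), but to carry the global largeness constraint $|\cup_1^n N_i|>P$ all the way through. Recall that, under (\ref{2ineq}), the \emph{unrestricted} version of the left-hand side of (\ref{nfinal}) is exactly the exponent controlled in Theorem \ref{KP}, and is therefore bounded by $|Y|\log(M)$. So it suffices to show that imposing $|\cup_1^n N_i|>P$ replaces the per-site budget $\log(M)$ by the tail $\delta(P)$ of (\ref{epsilon}).

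First I would dominate the connected-graph coefficient by the tree-graph inequality, $|C(N_1,\ldots,N_n)|\le \sum_{T}\prod_{\{i,j\}\in T}c(N_i,N_j)$, the sum running over spanning trees $T$ on $\{1,\ldots,n\}$, and pin the cluster to $Y$ by $c(Y,\cup_1^n N_i)\le \sum_{y\in Y}c(y,\cup_1^n N_i)$, rooting the spanning tree at the polymer realizing the contact with the chosen site $y$. This turns the sum into a rooted branching sum over polymers, in which every vertex carries a factor $v_N$ and every tree edge carries an adjacency factor $c(N,N')$; as in the proof of Proposition \ref{M}, summing one polymer against its parent costs at most $\sum_{N\in a_y(N)}v_N M^{|N|}\le\log(M)$, the excess weights $M^{|N|}$ absorbing the branching combinatorics and the $1/n!$.

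Next I would convert the constraint on the union into a constraint on links. Since each polymer $N_i=\Gamma_{C,i}^*$ satisfies $|N_i|\le p|\Gamma_{C,i}|$, the hypothesis $|\cup_1^n N_i|>P$ forces $\sum_i|\Gamma_{C,i}|\ge \frac1p|\cup_1^n N_i|>P/p$, so the whole rooted cluster contains more than $P/p$ links. Organizing the rooted branching sum by its total link count, the unrestricted series is the same geometric series in $rc(Mp)^p\epsilon(L)$ that appears in the proof of Proposition \ref{M}; restricting to clusters with more than $P/p$ links keeps only its tail, which is precisely $\delta(P)$ as computed in Proposition \ref{pin}. Summing the rooting site $y$ over $Y$ then yields the factor $|Y|$ and the bound (\ref{nfinal}).

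The step I expect to be the main obstacle is the global-to-local reduction in the third paragraph: the constraint $|\cup_1^n N_i|>P$ is a condition on the \emph{union} of the polymers, whereas the geometric tail $\delta(P)$ is naturally indexed by a \emph{total link count}. Making this rigorous requires checking that, after the tree-graph and rooting reductions, the connected-graph combinatorics and the overlaps among the $N_i$ do not degrade the geometric tail, i.e., that truncating the full cluster series to total link count exceeding $P/p$ is still dominated term-by-term by the single-site tail series of Proposition \ref{pin}, with the $M^{|N|}$ weights continuing to absorb the tree and symmetry factors exactly as in the unrestricted estimate.
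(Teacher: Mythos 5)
Your proposal is correct and takes essentially the same route as the paper: the paper's own (two-sentence) proof likewise uses the connectivity forced by the coefficient (\ref{C}) to treat the cluster as a single $L$-connected hypergraph, notes that it can be pinned to $Y$ in at most $|Y|$ ways, and invokes the link-count tail of Proposition \ref{pin} exactly as you do via $|\cup_1^n N_i|>P$ forcing more than $P/p$ links. The combinatorial absorption step you flag as the main obstacle is a genuine subtlety, but it is precisely what the paper leaves implicit, so your write-up is, if anything, more detailed than the original.
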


\begin{proof}
This follows from Proposition \ref{pin}. Remove $Y$, the remaining
hypergraph is still $L$-connected by (\ref{C}). Moreover, there
can be at most $|Y|$ choices for where it is pinned down.
\end{proof}

\begin{theorem}
\label{band} Suppose \textbf{Hypothesis} holds. Then there is an
approximate band property for the matrix of partial derivatives:
For subset $W$ of the original lattice and subset $Z$ of the image
lattice that are sufficiently far apart, the partial derivative
$\frac{\partial J'(Z)}{\partial J(W)}$ of the RG transformation
(\ref{bound}) is arbitrarily small. Let
\begin{equation}
l(W, Z)=\inf\{\text{dist}(w, z): w\in \bar{W}, z\in \bar{Z}\}
\end{equation}
be the distance between $W$ and $Z$ measured in $\bar{L}$. For
fixed but arbitrary $S, Q$ and $K$, if
\begin{equation} l(W, Z)>(3+a)(pS+QK),
\end{equation}
then
\begin{equation}
\label{enough} |\frac{\partial J'(Z)}{\partial J(W)}| \leq
M^{p}\left(1+\log(M)\right)^{p}\left(\frac{\delta(S)}{\log(M)}+\left(\delta(Q)+\delta(K)\right)p(1+S)M^{p(1+S)}\right).
\end{equation}
\end{theorem}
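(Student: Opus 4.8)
The plan is to feed the cluster representation (\ref{bound}) into the Koteck\'y--Preiss machinery and then track exactly which polymers can depend on the block spins sitting at $Z$. First I would use Theorem \ref{KP} to rewrite the avoidance ratio as $\sum_{\Delta'}\prod_{N\in\Delta'}w_N/\sum_{\Delta}\prod_{N\in\Delta}w_N=e^{-A(R\cup\bar W)}$, where $A(Y)=\sum_{n\ge1}\frac1{n!}\sum_{N_1,\dots,N_n}C(N_1,\dots,N_n)c(Y,\cup_1^n N_i)w_{N_1}\cdots w_{N_n}$ is the usual Mayer exponent, so that $\frac{\partial J'(Z)}{\partial J(W)}=\sum_{\sigma'}\sigma'_Z\sum_R\tilde w_R\,e^{-A(R\cup\bar W)}$. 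The mechanism behind the band property is the parity identity $\sum_{\sigma'}\sigma'_Z g=0$ whenever $g$ does not depend on the block spins in $Z$: only polymer configurations that are $L$-connected all the way from $\bar W$ to $\bar Z$ can survive the sum. Since an $L$-connected chain advances by at most $3+a$ per link in $\bar{\L}$, the hypothesis $l(W,Z)>(3+a)(pS+QK)$ guarantees that any surviving structure is long, and this is precisely what the factors $\delta(S)$, $\delta(Q)$, $\delta(K)$ will measure.

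Next I would split $\sum_R=\sum_{R\ \mathrm{large}}+\sum_{R\ \mathrm{small}}$ according to whether the component of the direct polymer $R$ attached to $W$ has support exceeding $S$. For the large part I bound $|e^{-A(R\cup\bar W)}|\le M^{|R\cup\bar W|}$ by (\ref{avoid}) and re-run the mass estimate (\ref{estimate})--(\ref{suc}) from Theorem \ref{expar}, except that the spanning component is now constrained to have support $>S$; by Proposition \ref{pin} this replaces one factor $\log(M)$ (coming from Proposition \ref{M}) by $\delta(S)$, i.e.\ multiplies the Theorem \ref{expar} bound $M^p(1+\log(M))^p$ by $\delta(S)/\log(M)$. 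This produces the first term of (\ref{enough}).

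For the small part, $|R\cup\bar W|\le p(1+S)$, so $|e^{-A(R\cup\bar W)}|\le M^{p(1+S)}$, and---this is the key point---$\tilde w_R$ is supported within distance $(3+a)pS$ of $W$ and is therefore independent of the block spins in $Z$. Hence all of the $\sigma'_Z$-dependence is carried by $e^{-A(R\cup\bar W)}$. I would write $A=A_0+A_Z$, isolating in $A_Z$ the connected clusters whose support meets the $L$-neighbourhood of $Z$; being simultaneously pinned to $R\cup\bar W$ and reaching $Z$, these clusters span more than $(3+a)QK$ and hence have support exceeding $QK$. Only the remainder $e^{-A_0}(e^{-A_Z}-1)$ survives $\sum_{\sigma'}\sigma'_Z$, since $\tilde w_R\,e^{-A_0}$ is $\sigma'_Z$-independent. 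A cluster of support $>QK$ either contains a single set of cardinality $>Q$ or consists of more than $K$ sets (total support $>K$); applying Proposition \ref{pindown} with the pinning set $R\cup\bar W$ (at most $p(1+S)$ pinning sites) bounds these two alternatives by $p(1+S)\delta(Q)$ and $p(1+S)\delta(K)$, so $|A_Z|\le p(1+S)(\delta(Q)+\delta(K))$. Using $|e^{-A_Z}-1|\le|A_Z|e^{|A_Z|}$ together with $\sum_{R\ \mathrm{small}}|\tilde w_R|\le M^p(1+\log(M))^p$ from Theorem \ref{expar} assembles the second and third terms, and adding the two cases gives (\ref{enough}).

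The main obstacle is the small-$R$ step: rigorously extracting the $\sigma'_Z$-dependence through the exponentiated cluster sum and certifying that the surviving clusters genuinely span from $W$ to $Z$. This requires (i) the locality statement that $\tilde w_R$ for small $R$, built from the $K(B)$'s and the modified expectation $E$, does not see the block spins in $Z$, and (ii) the geometric conversions turning the distance threshold $(3+a)(pS+QK)$ into the cardinality thresholds $S$, $Q$, $K$ via the per-link advance $3+a$ and the per-block diameter bound for allowable $B$. The remaining combinatorial bookkeeping is the same as in Theorem \ref{expar}, with one $\log(M)$-factor per long component traded for the appropriate $\delta$.
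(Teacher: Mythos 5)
Your proposal is correct and follows the paper's own strategy in all essentials: the same split of the sum over $R$ by the size threshold $S$ (yours is by largest component rather than by $|R|>pS$, but either works), the same Case 1 estimate in which Proposition \ref{pin} trades one factor of $\log(M)$ in (\ref{suc}) for $\delta(S)$, and, for the small-$R$ case, the same three ingredients: the Koteck\'{y}--Preiss representation (\ref{exp}), the parity identity $\sum_{\sigma'}\sigma'_Z g=0$ for $\sigma'_Z$-independent $g$, and Proposition \ref{pindown} with pinning set $R\cup\bar{W}$ of size at most $p(1+S)$. The one place you genuinely reorganize the argument is the extraction of the $\sigma'_Z$-dependence. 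The paper telescopes through truncated expansions, $F(\infty,\infty)=[F(\infty,\infty)-F(Q,\infty)]+[F(Q,\infty)-F(Q,K)]+F(Q,K)$, bounds the two differences by the mean value theorem together with (\ref{avoid}) and (\ref{nfinal}), and kills $F(Q,K)$ by parity because it depends only on a ball of radius $(3+a)(pS+QK)$; you instead split the Mayer exponent geometrically, $A=A_0+A_Z$, kill $\tilde{w}_R\,e^{-A_0}$ by parity, and control the survivor $e^{-A_0}(e^{-A_Z}-1)$ by observing that clusters contributing to $A_Z$ must connect $R\cup\bar{W}$ to $Z$ and are therefore large. These are dual truncations --- the paper truncates by cluster size and deduces locality of the truncation, you truncate by locality and deduce size of the remainder --- and both reduce to the identical dichotomy (one oversized polymer versus too many polymers, yielding $\delta(Q)+\delta(K)$ via Proposition \ref{pindown}), hence the same bound (\ref{enough}). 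Your version has the conceptual advantage of identifying the parity-surviving terms as exactly the $W$-to-$Z$ connecting clusters, at the cosmetic cost of the factor $e^{|A_Z|}$ from $|e^{-A_Z}-1|\le |A_Z|e^{|A_Z|}$, which the paper's mean-value-theorem bookkeeping avoids. Finally, the step you flag, that a cluster with more than $K$ (resp.\ $Q$) polymers has support exceeding $K$ (a repetition issue in the cluster sum), is used in exactly the same form when the paper bounds $F(\infty,\infty)-F(Q,\infty)$ by $p(1+S)M^{p(1+S)}\delta(Q)$ via (\ref{nfinal}), so on this point you are at parity with the original's level of rigor.
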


Before starting the proof of Theorem \ref{band}, let us try to
understand this band property better.

\begin{proposition}
\label{pen} Suppose \textbf{Hypothesis} holds. Then for subset $W$
of the original lattice and subset $Z$ of the image lattice, as
the distance $l(W, Z)$ between $W$ and $Z$ gets large, the partial
derivative $|\frac{\partial J'(Z)}{\partial J(W)}|$ decays
sub-exponentially, a little slower than $\exp(-l(W, Z)^{1/2})$.
\end{proposition}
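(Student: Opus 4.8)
The plan is to read the sub-exponential decay directly off the quantitative band estimate (\ref{enough}) of Theorem \ref{band}, by optimizing over the three free parameters $S$, $Q$, $K$ as functions of the separation $l=l(W,Z)$. The first ingredient is the explicit form of $\delta$: by (\ref{epsilon}), writing $\kappa=\frac{rc(Mp)^p\epsilon(L)}{(p-1)^{p-1}}$ — which satisfies $\kappa<1$ once $L$ is large enough for (\ref{eps}) to hold — we have $\delta(P)=\frac{\kappa^{P/p}}{(p-1)(1-\kappa)}$, so $\delta(P)\leq De^{-\lambda P}$ with $\lambda=\frac1p\log(1/\kappa)>0$ and $D=\frac{1}{(p-1)(1-\kappa)}$. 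Thus every $\delta$ in (\ref{enough}) decays geometrically in its argument.

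First I would analyze the structure of the right-hand side of (\ref{enough}). The term $\delta(S)/\log(M)$ decays in $S$ and so favors large $S$. The second term, by contrast, carries the factor $p(1+S)M^{p(1+S)}$, which grows exponentially in $S$; this growth can only be defeated by the decay of $\delta(Q)$ and $\delta(K)$, so one needs $Q$ and $K$ large compared with $S$ — quantitatively, $\lambda Q$ and $\lambda K$ must dominate $pS\log(M)$. The decisive feature is the admissibility constraint $l>(3+a)(pS+QK)$, in which the budget couples $Q$ and $K$ \emph{multiplicatively} rather than additively.

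The optimization then proceeds as follows. I would take $Q=K$ to balance the two $\delta$-terms symmetrically, so the constraint becomes $pS+Q^2<l/(3+a)$. Equating the two terms of (\ref{enough}) to leading exponential order forces $Q\approx\gamma S$ with $\gamma=1+\frac{p\log(M)}{\lambda}$; feeding this into the constraint saturates the budget at $Q^2\approx l/(3+a)$, whence $S\sim\sqrt{l}/\gamma$ and $Q=K\sim\sqrt{l}$, all of order $\sqrt{l}$. With this choice — rounded to integers, which I would check is admissible for all sufficiently large $l$, the neglected linear term $pS$ being lower order — both terms of (\ref{enough}) are bounded by a polynomial in $\sqrt{l}$ times $\exp(-c\sqrt{l})$ with the explicit $c=\frac{\lambda^2}{(\lambda+p\log(M))\sqrt{3+a}}>0$. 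Absorbing the polynomial prefactor into a slightly smaller exponent gives $\bigl|\frac{\partial J'(Z)}{\partial J(W)}\bigr|\leq\exp(-c'\,l^{1/2})$ for large $l$, which is exactly the asserted sub-exponential decay, of order $\exp(-l^{1/2})$ but a little slower.

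The main obstacle is conceptual rather than computational: it is the multiplicative coupling $QK$ in the constraint together with the exponential-in-$S$ growth factor $M^{p(1+S)}$ that degrades the decay. Had the budget been additive, $Q+K$, one could take $S,Q,K$ all of order $l$ and obtain genuine exponential decay $\exp(-c\,l)$; the product $QK$ is precisely what caps the usable parameters at order $\sqrt{l}$. The remaining work is elementary one-variable calculus, but care is needed to confirm that the competition between $e^{-\lambda Q}$ and $M^{pS}$ can be won by a single admissible choice of $S,Q,K$, and that the integer roundings and the discarded term $pS$ do not spoil the bound in the large-$l$ limit.
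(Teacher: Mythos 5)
Your proposal is correct, and it in fact proves something slightly stronger than both the Proposition and the paper's own proof. The skeleton is the same as the paper's: read the decay off the quantitative band estimate (\ref{enough}) by choosing $S,Q,K$ as functions of $l$, with the multiplicative budget $QK$ in the constraint $l>(3+a)(pS+QK)$ being what caps all parameters at order $\sqrt{l}$ — the paper's proof exploits exactly this bound, and your diagnosis of the $QK$ coupling as the obstruction to true exponential decay matches its conclusion. The difference is in the parameter selection. The paper takes $S=\frac1p\bigl(\frac{l}{2(3+a)}\bigr)^{\alpha}$ and $Q=K=\bigl(\frac{l}{2(3+a)}\bigr)^{\beta}$ with $0<\alpha<\beta\leq 1/2$, deliberately mismatching the powers so that the problematic term decays like $\exp(-l^{\beta}+l^{\alpha})\sim\exp(-l^{\beta})$ with no bookkeeping of constants; the price is that the final decay rate is $\exp(-l^{\alpha})$ with $\alpha$ strictly less than $1/2$, which is precisely the ``a little slower than $\exp(-l^{1/2})$'' in the statement. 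You instead balance the two exponents at the same power $1/2$ by tuning the constant, $Q=K=\gamma S$ with $\gamma=1+\frac{p\log M}{\lambda}$, and your computation is right: with $\lambda Q-pS\log M=\lambda S$ both terms carry the exponent $\lambda S$, the saturated budget gives $S\sim\frac{1}{\gamma}\sqrt{l/(3+a)}$, and the resulting bound $\mathrm{poly}(\sqrt{l})\,\exp\bigl(-c\sqrt{l}\bigr)$ with $c=\frac{\lambda^2}{(\lambda+p\log M)\sqrt{3+a}}$ is a genuine stretched exponential of exponent exactly $1/2$ (your check that $\kappa<1$ under (\ref{eps}) is also correct, since $\kappa\leq\frac{(p-1)\log M}{1+(p-1)\log M}<1$). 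So your optimization shows that the loss relative to $\exp(-l^{1/2})$ is only a constant in the exponent, not a loss in the power — a sharper statement than the paper extracts from the same theorem, at the cost of the constant-balancing and rounding care that the paper's mismatched-powers trick is designed to avoid.
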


\begin{proof}
For notational convenience, we denote $l(W, Z)$ simply by $l$.
Take
\begin{equation}
S=\frac{1}{p}\left(\frac{l}{2(3+a)}\right)^{\alpha},
\end{equation}
and
\begin{equation}
Q=K=\left(\frac{l}{2(3+a)}\right)^{\beta},
\end{equation}
where $0<\alpha<\beta\leq 1/2$. We examine (\ref{enough}). The
first factor, $M^{p}\left(1+\log(M)\right)^{p}$, is just a
constant. The second factor is more complicated and thus merits
more attention. The first term, $\delta(S)/\log(M)$, decays as
$\exp\left(-l^{\alpha}\right)$, whereas the second term,
$\left(\delta(Q)+\delta(K)\right)p(1+S)M^{p(1+S)}$, decays as
$\exp\left(-l^{\beta}+l^{\alpha}\right)\sim\exp\left(-l^{\beta}\right)$.
Piecing it all together, $|\frac{\partial J'(Z)}{\partial J(W)}|$
decays sub-exponentially, as $\exp(-l^{\alpha})$.
\end{proof}

\noindent \textit{Proof of Theorem \ref{band}.} Fix an $S$ that is
large enough. We rewrite (\ref{bound}) as
\begin{equation}
\label{small} \frac{\partial J'(Z)}{\partial
J(W)}=\sum_{\sigma'}\sigma'_Z\frac{\sum_{|R|>pS,
\Delta'}\tilde{w}_R\prod_{N\in\Delta'}w_N}{\sum_{\Delta}\prod_{N\in
\Delta} w_N}+\sum_{\sigma'}\sigma'_Z\frac{\sum_{|R|\leq pS,
\Delta'}\tilde{w}_R\prod_{N\in\Delta'}w_N}{\sum_{\Delta}\prod_{N\in
\Delta} w_N}.
\end{equation}
Following, we will verify the smallness of (\ref{small}) by
examining the two terms on the right-hand side separately.

\noindent Case 1: $|R|>pS$. Similarly as in the proof of Theorem
\ref{expar}, we estimate (\ref{estimate}). Remove $W$, the
remaining hypergraph (with cardinality greater than $pS$) breaks
up into $k: 0\leq k\leq p$ $L$-connected components, so at least
one of them has cardinality greater than $S$. By (\ref{suc}) and
(\ref{epsilon}), the contribution of this hypergraph is bounded by
\begin{equation}
M^{p}\delta(S)\sum_{k=0}^{p}\tbinom {p}{k} \left(\log
(M)\right)^{k-1}=\frac{\delta(S)}{\log(M)}M^{p}
\left(1+\log(M)\right)^{p}.
\end{equation}

\noindent Case 2: $|R|\leq pS$. We need to do a more careful
analysis for this case. Recall that $N\in \Delta'$ implies $N$ and
$R\cup \bar{W}$ are at least $a$-distance apart. By the
Koteck\'{y}-Preiss theorem (Theorem \ref{KP}), (\ref{2ineq})
implies
\begin{multline}
\label{exp}
\sum_{\Delta'}\prod_{N\in\Delta'}w_N/\sum_{\Delta}\prod_{N\in
\Delta}
w_N\\=\exp\left(-\sum_{n=1}^{\infty}\frac{1}{n!}\sum_{N_1,...,N_n}C\left(N_1,...,N_n\right)c(R\cup
\bar{W}, \cup_1^n N_i)w_{N_1}\cdots w_{N_n}\right).
\end{multline}
For notational convenience, we will denote the right-hand side of
(\ref{exp}) by $F(\infty, \infty)$, where the first parameter of
$F$ indicates the maximum number of subsets $N_i$'s allowed in the
expansion, and the second parameter of $F$ indicates the
cardinality restriction over these $N_i$'s. It is straightforward
that for fixed $Q$ and $K$,
\begin{equation}
F(\infty, \infty)=F(\infty, \infty)-F(Q, \infty)+F(Q, \infty)-F(Q,
K)+F(Q, K).
\end{equation}

We first examine $F(\infty, \infty)-F(Q, \infty)$. This difference
can be regarded as the tail of the convergent series (\ref{exp}),
thus should be small when $Q$ is large. In fact, it is bounded by
$p(1+S)M^{p(1+S)}\delta(Q)$ by the mean value theorem, applied to
(\ref{avoid}) and (\ref{nfinal}). Fix such a $Q$. We next examine
$F(Q, \infty)-F(Q, K)$. For every subset $N$ of $\bar{L}$, define
\begin{eqnarray} u_N=\left\{\begin{array}{ll}
w_N & \mbox{if $|N|\leq K$};\\
0 & \mbox{otherwise}.\end{array} \right.
\end{eqnarray}
The difference in $F$ can then be interpreted as induced by
evaluating (\ref{exp}) using two sets of parameters $w_N$ and
$u_N$. These two parameter sets both lie in the region of
analyticity of (\ref{exp}), thus intuitively, the difference can
be as small as desired when $K$ is large enough. We again refer to
(\ref{avoid}) and (\ref{nfinal}), and conclude that it is bounded
by $p(1+S)M^{p(1+S)}\delta(K)$. Fix such a $K$. For these two
situations, the only thing left to show now is that
\begin{equation}
\sum_{|R|\leq pS}|\tilde{w}_R|
\end{equation}
is finite, but this naturally follows from (\ref{suc}).

Finally, we examine $F(Q, K)$. As $R\cup \bar{W}$ and $\cup_1^n
N_i$ are within $a$-distance apart, $F(Q, K)$ will only depend on
$L$-sites in a finite region (roughly a ball with radius
$(3+a)(pS+QK)$). If $\bar{Z}$ is outside this region, then
\begin{equation}
\sum_{|R|\leq pS}\tilde{w}_R F(Q, K)
\end{equation}
is a constant with respect to $\sigma'_Z$, thus, when summing over
all possible image configurations $\sigma'$ as in (\ref{small}),
it vanishes. \qed

\section{Upper bound for the RG linearization}
\label{linearization}
\begin{proposition}
\label{ul} Fix a subset $Z$ of the image lattice. Let $n(E)$ be
the number of subsets $W$ of the original lattice that are at most
$E$-distance away from $Z$ (measured in $\bar{L}$),
\begin{equation}
n(E)=\#\{W: l(W, Z)\leq E\}.
\end{equation}
Then $n(E)$ grows polynomially in $E$.
\end{proposition}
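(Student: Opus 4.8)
The plan is to exploit the finite-range and translation-invariance of the Hamiltonian to reduce the count to a volume estimate in $\bar{\L}$. First I would recall that by the finite-range assumption, $J(W)$ is a genuine variable only for subsets $W$ in the support of the interaction, and all such $W$ satisfy $\diam(W) \leq r_0$ for a fixed range $r_0$. Passing to the coarse lattice, this bounds the cardinality and the diameter of $\bar{W}$: there is a constant $D$, depending only on $r_0$, $L$, and $d$, with $\diam(\bar{W}) \leq D$ for every relevant $W$.

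Next I would translate the distance constraint. Since $Z$ is fixed, $\bar{Z}$ is a fixed finite subset of $\bar{\L}$, and the condition $l(W, Z) \leq E$ says precisely that $\bar{W}$ meets the $E$-neighborhood of $\bar{Z}$. Combining this with $\diam(\bar{W}) \leq D$, every site of $\bar{W}$ lies within distance $E + D$ of $\bar{Z}$, so $\bar{W}$ is contained in the ball $B(\bar{Z}, E + D)$ of $\bar{\L}$.

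The counting step then proceeds as follows. Using the linear ordering of $\bar{\L}$ already introduced, fix the least site of $\bar{W}$ as an anchor; this anchor must lie in $B(\bar{Z}, E+D)$. The number of admissible anchor sites is at most the volume of this ball, which is $O\big((E+D)^d\big)$, a polynomial of degree $d$ in $E$ because $D$ is a fixed constant and $\bar{Z}$ is a fixed finite set. For each fixed anchor, finite-range and translation-invariance bound the number of relevant $W$ with that anchor by a constant, in complete analogy with the finiteness of the constant $c$ in Proposition \ref{M}. Multiplying the two factors shows $n(E) = O(E^d)$, which is polynomial in $E$.

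The main obstacle, and the only point requiring care, is the first step: pinning down precisely which subsets $W$ are being counted. The statement as phrased ranges $W$ over all finite subsets, but without the finite-range restriction the count within a ball would be exponential in its volume rather than polynomial. I would therefore make explicit that the relevant $W$ are exactly those in the translation-invariant, finite-range support of the interaction, so that only finitely many shapes pass through any given site. Once this is established, the remaining volume and shape counts sketched above are routine.
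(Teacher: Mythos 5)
Your proposal is correct and follows essentially the same route as the paper: the paper's one-line proof likewise uses finite-range and translation-invariance to bound the number of relevant $W$ through any given site of $\bar{\L}$ by a constant ($n=\sup_{y\in \bar{\L}} \#\{W: y\in \bar{W}\}<\infty$), and then multiplies by the volume of a $d$-dimensional ball of radius $E$ to get growth of order $E^d$. Your anchoring argument and your explicit remark that only interactions in the finite-range support are counted simply spell out what the paper leaves implicit.
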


\begin{proof}
Due to our finite-range and translation-invariant assumptions on
the Hamiltonian,
\begin{equation}
n=\sup_{y\in \bar{\L}} \#\{W: y\in \bar{W}\}<\infty.
\end{equation}
Thus $n(E)$ grows at the same rate as the volume of a
$d$-dimensional ball with radius $E$, i.e., polynomial growth
$E^d$.
\end{proof}

\begin{theorem}
\label{lin} Suppose \textbf{Hypothesis} holds. Then the
linearization $\mathrm{L}(J_c)$ of the RG transformation
(\ref{linear}) is well-defined and has an upper bound.
\end{theorem}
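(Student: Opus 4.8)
The plan is to read the linearization (\ref{linear}) as a linear operator on the Banach space $\mathcal{B}$ and to bound its operator norm directly, using the decay of the matrix entries $\partial J'(Z)/\partial J(W)$ established in the previous sections. Well-definedness amounts to absolute convergence of the defining series for each $Z$, and the upper bound amounts to the estimate $\|\mathrm{L}(J_c)K\|\leq C\|K\|$ for a constant $C$ independent of the deviation $K$.

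First I would fix a subset $Z$ of the image lattice and show that $\sum_W |\partial J'(Z)/\partial J(W)|$ converges, uniformly in $Z$. Each entry is finite by Theorem \ref{expar}. To sum them I would group the subsets $W$ by the distance $l(W,Z)$: Proposition \ref{ul} bounds the number of $W$ with $l(W,Z)\leq E$ by a polynomial in $E$, while Proposition \ref{pen} shows that an entry with $l(W,Z)$ large is sub-exponentially small, of order $\exp(-l(W,Z)^{\alpha})$. Since sub-exponential decay overwhelms polynomial growth, the series converges and gives a uniform bound $\sup_Z \sum_W |\partial J'(Z)/\partial J(W)|\leq C_1<\infty$. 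As $|K(W)|\leq \|K\|$ for every finite $W$ (by translation invariance each $W$ has a translate containing the origin), the defining series for $\mathrm{L}(J_c)K(Z)$ then converges absolutely, so the operator is well-defined.

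Next I would estimate $\|\mathrm{L}(J_c)K\|=\sum_{Z\ni 0}|\mathrm{L}(J_c)K(Z)|$. Passing the absolute value inside and interchanging the (absolutely convergent) sums yields
\[
\|\mathrm{L}(J_c)K\|\leq \sum_W |K(W)|\sum_{Z\ni 0}\left|\frac{\partial J'(Z)}{\partial J(W)}\right|.
\]
A crude bound on $|K(W)|$ is not permitted here, since summing over all translates of a fixed shape would diverge; instead I would use translation invariance of both $K$ and the matrix $\partial J'(Z)/\partial J(W)$. Grouping the $W$'s into translation classes and shifting the inner $Z$-sum rewrites the right-hand side as $\sum_{\omega}|K(\omega_0)|\sum_{Z'}|\partial J'(Z')/\partial J(\omega_0)|\,|Z'|$, where $\omega$ ranges over shapes with representative $\omega_0$, and this is to be matched against $\|K\|=\sum_\omega |\omega_0|\,|K(\omega_0)|$. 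The inner sum converges by the same decay-versus-counting estimate as above, and since the finite-range hypothesis bounds all cardinalities $|Z'|$, it is dominated by $C|\omega_0|$ uniformly in the shape. Summing over $\omega$ then gives $\|\mathrm{L}(J_c)K\|\leq C\|K\|$, the desired upper bound.

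I expect the main difficulty to be combinatorial rather than analytic. The analytic core—that the polynomial counting of Proposition \ref{ul} is defeated by the sub-exponential decay of Proposition \ref{pen}—is already in hand, and is what makes every series above converge. The delicate point is instead the bookkeeping forced by the translation-invariant norm $\|J\|=\sum_{X\ni 0}|J(X)|$: I must track the multiplicities of translates correctly (the cardinality factor $|Z'|$ and the weight $|\omega_0|$ that appear when the double sum is reorganized) and justify the interchange of summation, which rests on the absolute convergence secured in the first step via Theorem \ref{expar}, Proposition \ref{pen}, and Proposition \ref{ul}.
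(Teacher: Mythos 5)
Your first step is exactly the paper's proof of Theorem \ref{lin}: fix $Z$, group the $W$'s into annuli $n\leq l(W,Z)<n+1$, bound each entry by the sub-exponential decay $\exp(-n^{\alpha})$ of Proposition \ref{pen} and the number of terms by the polynomial count $(n+1)^{d}$ of Proposition \ref{ul}, and conclude convergence of $\sum_{n}\exp(-n^{\alpha})(n+1)^{d}$ (the paper invokes the integral test). This yields $|\mathrm{L}(J_c)K(Z)|\lesssim \|K\|_{\infty}$ uniformly in $Z$, which is all the paper claims: its ``upper bound'' is this pointwise (sup-norm) bound in terms of $\|K\|_{\infty}\leq\|K\|$, not an operator-norm bound on $\mathcal{B}$. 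Had you stopped there, your proposal would simply coincide with the paper's argument and would suffice.

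Your second step, the operator bound $\|\mathrm{L}(J_c)K\|\leq C\|K\|$ in the norm $\sum_{Z\ni 0}|\cdot|$, goes beyond the paper, and as written it has a genuine gap. After interchanging sums you must control $\sum_{Z\ni 0}\left|\partial J'(Z)/\partial J(W)\right|$ for fixed $W$, and the paper's estimates cannot do this. Theorem \ref{band} and Proposition \ref{pen} give decay only in $l(W,Z)=\inf\{\mathrm{dist}(w,z):w\in\bar{W},\,z\in\bar{Z}\}$, the \emph{minimal} block distance, and for fixed $W$ there are infinitely many subsets $Z\ni 0$ with $l(W,Z)$ bounded: $Z$ may be arbitrarily large while still having one point near $W$. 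The counting in Proposition \ref{ul} is not symmetric in its two arguments; it is finite precisely because the \emph{original} interaction is finite-range, so only finitely many admissible $W$'s meet any given site. The renormalized interaction enjoys no such property---$J'$ merely lies in $\mathcal{B}$---so your assertion that ``the finite-range hypothesis bounds all cardinalities $|Z'|$'' is false, and the sum over shapes of $Z$ in your reorganization is not dominated by $C|\omega_0|$. To make your stronger conclusion work one would need, in addition to the band structure, decay of $\partial J'(Z)/\partial J(W)$ in the size (or diameter) of $Z$, an estimate nowhere established in the paper.
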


\begin{proof}
This is mainly due to the fact that sub-exponential decay
dominates polynomial growth. Take $K$ with $||K||$ small (and so,
a fortiori, $||K||_{\infty}$ is small). By Propositions \ref{pen}
and \ref{ul},
\begin{eqnarray}
|\mathrm{L}(J_c)K(Z)|&\leq& \sum_{n=0}^{\infty}\sum_{n\leq l(W, Z)<
n+1} \left|\frac{\partial J'(Z)}{\partial
J(W)}\right|_{J=J_c}\left|K(W)\right|\\ &\lesssim&
||K||_{\infty}\sum_{n=0}^{\infty}\exp(-n^{\alpha})(n+1)^{d}.
\end{eqnarray}
Our claim then follows from the integral test.
\end{proof}

\section*{Acknowledgments}
Part of this work appeared in a PhD dissertation at the University
of Arizona. The author owes deep gratitude to her PhD advisor Bill
Faris for his continued help and support, and to Tom Kennedy, Doug
Pickrell, and Bob Sims for their kind and helpful suggestions and
comments. It is a pleasure to acknowledge stimulating discussions
with participants in the $2011$ renormalization group workshop in
Oberwolfach, organized by Margherita Disertori, Joel Feldman, and
Manfred Salmhofer.

\end{document}